\newcommand{\probname}[1]{\textnormal{\textsc{#1}}}
\newcommand{\gm}[0]{\probname{Gerrymandering}}
\newcommand{\lp}[0]{\probname{$k$-Labeled Path}}
\newcommand{\scov}[0]{\probname{Set Cover}}
\newcommand{\wgm}[0]{\probname{Weighted Gerrymandering}}
\newcommand{\uwgm} [0] {\probname{Unit Weight Gerrymandering}}
\DeclareMathOperator*{\argmax} {arg\ max}
\newcommand{\D}{\mathcal{D}}
\newcommand{\Un}{\mathcal{U}}
\newcommand{\F}{\mathcal{F}}
\newcommand{\problembox} [3] {
    \vspace{\dimexpr\parskip+1.35ex}
    \noindent
    \begin{tikzpicture}
        \node[draw=black!40, rounded corners, inner sep=2ex] (content) {
            \begin{tabularx} {\dimexpr\columnwidth-4ex-0.3pt} {l X}
                \textbf{Input:} & #2\\
                \textbf{Problem:} & #3\\
            \end{tabularx}
        };

        \node[inner sep=3pt, fill=white, anchor=north west] at ($(content.north west) + (2ex, 1.35ex)$) {\probname{#1}};
    \end{tikzpicture}%
}
\definecolor{orange}{RGB}{230,159,0}
\definecolor{skyblue}{RGB}{86,180,233}
\definecolor{bluegreen}{RGB}{0,158,115}
\definecolor{yellow}{RGB}{240,228,66}
\definecolor{blue}{RGB}{0,114,178}
\definecolor{vermillion}{RGB}{213,94,0}
\definecolor{redpurple}{RGB}{204,121,167}
\newcommand{\bolddots} [0] {\Huge{\textbf{.\hspace{-0.8mm}.\hspace{-0.8mm}.}}}
\theoremstyle{plain}
\newtheorem{theorem} {Theorem} [section]
\newtheorem{lemma} [theorem] {Lemma}
\newtheorem{observation} [theorem] {Observation}
\newtheorem{corollary} [theorem] {Corollary}
\theoremstyle{definition}
\newtheorem{definition} [theorem] {Definition}
\title{Parameterized Complexity of Gerrymandering}
\author{Andrew Fraser, Brian Lavallee, and Blair D.\ Sullivan}
\date{December 2023}
\begin{document}

\maketitle

\let\thefootnote=1
\footnotetext{This work was supported in part by the Gordon \& Betty Moore Foundation under award GBMF4560 to Blair D.\ Sullivan.}

\begin{abstract}
    In a representative democracy, the electoral process involves partitioning geographical space into districts which each elect a single representative.
    These representatives craft and vote on legislation, incentivizing political parties to win as many districts as possible (ideally a plurality).
    Gerrymandering is the process by which district boundaries are manipulated to the advantage of a desired candidate or party.
    We study the parameterized complexity of \gm{}, a graph problem (as opposed to Euclidean space) formalized by Cohen-Zemach et al.\ (AAMAS 2018) and Ito et al.\ (AAMAS 2019) where districts partition vertices into connected subgraphs.
    We prove that \uwgm{} is W[2]-hard on trees (even when the depth is two) with respect to the number of districts $k$.
    Moreover, we show that \uwgm{} remains W[2]-hard in trees with $\ell$ leaves with respect to the combined parameter $k+\ell$.
    In contrast, Gupta et al.\ (SAGT 2021) give an FPT algorithm for \gm{} on paths with respect to $k$.
    To complement our results and fill this gap, we provide an algorithm to solve \gm{} that is FPT in $k$ when $\ell$ is a fixed constant.
\end{abstract}

\let\thefootnote=2
\section{Introduction}
Many electoral systems around the world divide voters into districts.
The votes in each district are tallied separately, and each district elects a representative to a seat in a congressional system.
The adversarial manipulation of these districts to favor one political party over another is known as \textit{gerrymandering} and has the potential to greatly skew elections.
Gerrymandering has been studied in various contexts, including political science~\cite{mcghee2020}, geography~\cite{lewenberg2017}, and social networks~\cite{Talmon2018,tsang2016}.
Many studies focus on the prevention of gerrymandering~\cite{chen2013} or calculate a fairness metric on real-world districts~\cite{chen2016,cottrell2019,ko2022,simon2020}.
The increasing role of algorithms in the creation and evaluation of district maps~\cite{tam2019,earle2018} motivates the study of the computational complexity of gerrymandering problems.

In this paper, we study gerrymandering in the graph setting.
Cohen-Zemach et al.\ proposed a model in which the vertices of a graph represent (groups of) voters and edges model proximity and continuity~\cite{cohen2018}.
Compared to a geographic map, this abstraction is more general and therefore more powerful, since a graph can represent complex socio-political relationships in addition to geographical proximity.
Ito et al.\ followed this notion and formally defined the \gm{} problem~\cite{ito2019}.
Given a graph, \gm{} asks if the vertices can be partitioned into connected subsets so that a preferred candidate (or political party) wins the most districts.
A candidate wins a district by receiving the most votes, represented by vertex weights, within a district.

Several hardness results have been shown for this problem.
In 2019, Ito et al.\ proved \gm{} is NP-complete even when restricted to complete bipartite graphs with only $k=2$ districts and 2 candidates~\cite{ito2019}.
They also observed a simple $O(n^k)$ algorithm for trees (proving that \gm{} is XP with respect to $k$) and gave a polynomial time algorithm for stars.
In 2021, Bentert et al.\ proved that even \uwgm{} remains NP-hard on paths~\cite{bentert2023}.
They also prove that \gm{} is weakly NP-hard on trees with 3 candidates, but it becomes solvable in polynomial time when there are only 2 candidates.

In 2021, Gupta et al.\ showed that \gm{} is fixed parameter tractable (FPT) on paths with respect to the number of districts $k$ (independent of the number of candidates)~\cite{gupta2021}.
They gave an $O(2.619^k (n+m)^{O(1)})$ algorithm for \wgm{}, a generalization of \gm{} which allows vertices to split their votes between multiple candidates.

In this paper, we study the parameterized complexity of \gm{} in trees.
We prove that \uwgm{} is W[2]-hard\footnote{The version of this paper in SAGT 2023 only proves these results for \gm{}.} on trees (even when the depth is 2) with respect to the number of districts $k$, suggesting that no FPT algorithm exists.
This contrasts sharply with the polynomial time algorithm for stars (trees of depth 1), and answers an open question of Gupta et al.~\cite{gupta2021}.
To better understand the difference in complexity between trees and paths, we also study the problem in trees with only $\ell$ leaves.
In this setting, we prove that \uwgm{} is still W[2]-hard\footnotemark[2] with respect to the combined parameter $k + \ell$, even on subdivided stars (i.e.\ when only one vertex has degree greater than 2).
To complement this result, we also provide an algorithm for \wgm{} in trees with $\ell$ leaves.
The algorithm is FPT with respect to $k$ when $\ell$ is a fixed constant.

\section{Preliminaries}
\gm{} from~\cite{ito2019} is defined on simple, undirected graphs.
We use $C$ to denote the set of all candidates, and we annotate a graph $G = (V, E)$ with each vertex $v$ having a candidate preference $\chi(v)$ and number of votes cast $w(v)$.
Given a graph, \gm{} asks for a \emph{district-partition} of $G$ with $k$ districts.

\vspace{1em}

\begin{definition}
	Given a graph $G = (V, E)$, a \textbf{district-partition} of $G$ is a partition of $V$ into sets $D_1, \dots, D_k$ so that $D_1 \cup \dots \cup D_k = V$, $D_i \cap D_j = \emptyset$ for all $i \neq j$, and the induced subgraph $G[D_i]$ is connected for all $i$.
	We refer to $D_1, \dots, D_k$ as \textbf{districts}.
\end{definition}

Specifically, \gm{} asks for a district-partition in which a preferred candidate $p$ (equivalently a group of affiliated candidates or a political party) \emph{wins} a plurality of districts.
We refer to the following definition from Ito et al.\ \cite{ito2019}.

\begin{definition} \label{def:wins}
	Given a graph $G = (V, E)$ and a district-partition $\D$ of $G$, we define the set of all candidates with the most votes in a district $D \in \D$ as follows:
	\[\text{top}(D) := \argmax_{q \in C} \left\{\sum_{v \in D : \chi(v) = q} w(v) \right\}\]
\end{definition}

We say that a candidate $q$ \emph{leads} a district $D$ if $q \in \text{top}(D)$.
We note that Definition~\ref{def:wins} allows for multiple candidates to lead a single district.
If $\text{top(D)} = \{q\}$ and therefore $q$ is the only leader of the district, we say that $q$ \emph{wins} the district.

Formally, \gm{} is defined as follows.

\problembox{Gerrymandering}
{A graph $G = (V,E)$, a set of candidates $C$, a candidate function $\chi: V \rightarrow C$, a weight function $w: V \rightarrow \mathbb{N}$, a preferred candidate $p \in C$, and an integer $k \in \mathbb{N}$.}
{Is there a district-partition of $V$ into $k$ districts such that $p$ wins more districts than any other candidate leads?}

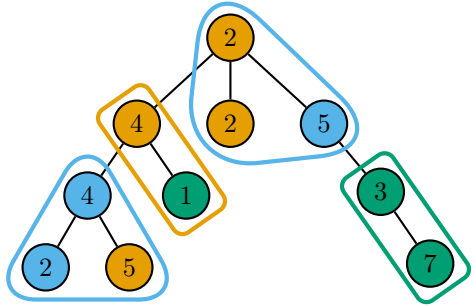
\begin{figure}[t]
	\begin{center}
		\begin{tikzpicture}
	\tikzstyle{node} = [circle, draw, thick]
	\tikzstyle{edge} = [thick]

	\node (1) [node, fill=orange] {4};
	\node (2) [node, fill=orange, right=0.6cm of 1] {2};
	\node (3) [node, fill=skyblue, right=0.6cm of 2] {5};
	\node (4) [node, fill=orange, above=0.5cm of 2] {2};

	\node (5) [node, fill=bluegreen, below right=0.45cm and 0.3cm of 3] {3};
	\node (6) [node, fill=bluegreen, below right=0.5cm and 0.2cm of 1] {1};
	\node (12) [node, fill=bluegreen, below right=0.5cm and 0.2cm of 5] {7};

	\node (9) [node, fill=skyblue, below left=0.5cm and 0.2cm of 1] {4};
	\node (10) [node, fill=skyblue, below left=0.5cm and 0.1cm of 9] {2};
	\node (11) [node, fill=orange, below right=0.5cm and 0.1cm of 9] {5};

	\draw (1) edge [edge] (4);
	\draw (1) edge [edge] (6);
	\draw (1) edge [edge] (9);
	\draw (2) edge [edge] (4);
	\draw (3) edge [edge] (5);
	\draw (3) edge [edge] (4);

	\draw (9) edge [edge] (10);
	\draw (11) edge [edge] (9);
	\draw (12) edge [edge] (5);

	\draw[ultra thick, rounded corners=6mm, color=skyblue] ($(9.north)+(0,0.5)$) -- ($(10.south west)+(-0.5,-0.2)$) -- ($(11.south east)+(0.5,-0.2)$) -- cycle;

	\draw[ultra thick, rotate=-55, rounded corners, color=orange] ($(1.north west)+(-0.1,0.45)$) rectangle ($(6.south east)+(0.1, -0.45)$);

	\draw[ultra thick, rounded corners=10mm, color=skyblue] ($(4.north west)+(-0.1,0.7)$) -- ($(2.south west)+(-0.4,-0.4)$) -- ($(3.south east)+(0.7,-0.1)$) -- cycle;

	\draw[ultra thick, rotate=-55, rounded corners, color=bluegreen] ($(5.north west)+(-0.1, 0.45)$) rectangle ($(12.south east)+(0.1, -0.45)$);

\end{tikzpicture}
	\end{center}
	\caption{
		A district-partition for a \gm{} instance with $k=4$ districts and $|C| = 3$ candidates.
		Each vertex $v$ is labeled with $w(v)$, the number of votes cast by $v$.
		Vertices are colored according to which candidate $\chi(v)$ they vote for.
		Blue indicates the preferred candidate $p$.
		Each district is outlined in the color of its winning candidate.
		Candidate $p$ wins two districts and the other two candidates each win only one, so this is a satisfying district-partition.
	}
	\label{fig:gerrymandering}
\end{figure}

Figure~\ref{fig:gerrymandering} shows a small example.
\uwgm{} is the natural restriction of \gm{} in which $w(v) = 1$ for all $v \in G$.
We defer the definition of \wgm{} to Section~\ref{sec:xp_alg} to avoid notational conflicts.
Our work focuses on the parameterized complexity of \gm{} on trees.
The study of parameterized complexity revolves around two important classes of problems.

\begin{definition} \label{def:FPT}
    A problem $\Pi$ is \textbf{fixed parameter tractable} (\textbf{FPT}) with respect to a parameter $k$
	if it admits an algorithm $A$ which can answer an instance of $\Pi$ of size $n$ with parameter value $k$ in time $O(f(k) \cdot n^{O(1)})$ for some computable function $f$.
	We call $A$ an FPT algorithm.
	Similarly, $\Pi$ is \textbf{slicewise polynomial} (\textbf{XP}) with respect to $k$ if $A$ runs in time $O(g(k) \cdot n^{h(k)})$ for computable functions $g,h$.
	In this case, we call $A$ an XP algorithm.
\end{definition}

Clearly, FPT $\subseteq$ XP, and so the study of parameterized complexity often focuses on determining whether or not a problem is in FPT.
Like P $\neq$ NP, FPT $\neq$ W[1] is the basic complexity assumption at the foundation of parameterized algorithms.
W[1]-hardness is proven using parameterized reductions which resemble standard NP-hardness reductions but have additional requirements on the translation of the parameter.
In this paper, we prove W[2]-hardness (an even stronger notion~\cite{cygan2015}) via parameterized reductions from \scov{}.

\problembox{Set Cover}
{A set of elements $\Un = \{e_1, \dots, e_n\}$, a family of sets $\F = \{S_1, \dots, S_m\}$, and an integer $t \in \mathbb{N}$.}
{Is there a subset $X \subseteq \F$ such that $|X| \leq t$ and $\bigcup_{S \in X} S = \Un$?}

\scov{} is a well-studied problem in the field of parameterized complexity which is known to be W[2]-hard when parameterized by the natural parameter $t$~\cite{flum2006}.
We assume that every element in $\Un$ appears in at least one set of $\F$, as otherwise it is trivially a NO-instance.
Moreover, we assume that $t \leq |\F|$ since otherwise it is a trivial YES-instance.
We refer to the textbook by Cygan et al.~\cite{cygan2015} for additional reading on parameterized complexity.

\section{W[2]-Hardness in Trees of Depth Two}
In this section, we prove that the \uwgm{} problem is W[2]-hard in trees of depth 2 parameterized by the number of districts $k$ using a reduction from \scov{}.
Before describing the reduction, we make the following observation about the frequency of elements in an instance of \scov{}.

\vspace{1em}

\begin{observation} \label{obs:sc_freq}
	Let $(\Un, \F, t)$ be an instance of \scov{}, and let $f_e$ denote the frequency of $e$: the number of sets in $\F$ which contain the element $e \in \Un$.
	There is an equivalent instance $(\Un, \F', t)$ in which $f_e = f_{e'}$ for all $e, e' \in \Un$.
\end{observation}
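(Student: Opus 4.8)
The plan is to equalize all element frequencies by padding $\F$ with singleton sets, an operation that is inert with respect to the existence of a small cover. First I would set $F := \max_{e \in \Un} f_e$, the largest frequency occurring in the instance, and note $F \ge 1$ because every element lies in at least one set of $\F$. Then I would build $\F'$ from $\F$ by adding, for every element $e \in \Un$, exactly $F - f_e$ copies of the singleton set $\{e\}$ (if one insists on $\F$ being a collection of distinct sets, the copies can be given distinct labels; this does not affect the argument). By construction every element of $\Un$ lies in exactly $F$ members of $\F'$, so the frequency condition holds. Since $F \le \m$ and at most $\n$ singletons are added per element, $|\F'| \le \m + \n\m$, so the new instance has size polynomial in the original — which matters later when this observation is invoked inside a parameterized reduction.

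It then remains to show $(\Un, \F, t)$ and $(\Un, \F', t)$ are equivalent. The forward direction is immediate: $\F \subseteq \F'$, so any cover $X \subseteq \F$ with $|X| \le t$ is still a valid cover using sets of $\F'$. For the reverse direction, suppose $X' \subseteq \F'$ is a cover with $|X'| \le t$. Every set in $X' \setminus \F$ is a singleton $\{e\}$ for some $e \in \Un$; using the standing assumption that $e$ appears in at least one set of $\F$, I would replace each such $\{e\}$ by an arbitrarily chosen set of $\F$ containing $e$. The resulting collection $X \subseteq \F$ still covers $\Un$ and satisfies $|X| \le |X'| \le t$, so $(\Un, \F, t)$ is a YES-instance.

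The only point that requires genuine care, rather than routine verification, is ensuring that enlarging $\F$ does not manufacture a cover of size at most $t$ that was previously unavailable; this is precisely what the singleton-replacement step rules out, and it is the reason I pad with singletons $\{e\}$ — whose sole element is already covered by $\F$ — rather than with larger sets, which could shift the frequencies of other elements and would not admit such a clean replacement. Everything else is bookkeeping: checking that the padding hits the target frequency $F$ uniformly and that the size bound is polynomial.
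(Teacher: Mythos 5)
Your proposal is correct and matches the paper's argument: both pad $\F$ with singleton sets $\{e\}$ until every element reaches the maximum frequency, and both recover a cover in the original instance by replacing each singleton with any set of $\F$ containing that element (which exists by the standing assumption). Your write-up is just a more detailed version of the paper's sketch, with the extra (harmless) bookkeeping about duplicate singletons and the polynomial size bound.
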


The equivalent instance can be constructed by adding additional sets to $\F$ which contain only a single element $e$.
This increases the frequency of $e$ by one and can be repeated until $f_e$ is equal to the maximum frequency of any element.
Since the new sets contain only a single element, they can be replaced in a feasible solution by any set in $\F$ which also contains that element.

\begin{figure}[t]
	\begin{center}
		\resizebox{0.75\linewidth}{!}{
			\begin{tikzpicture}
	\tikzstyle{node} = [circle, draw, thick, minimum size=8mm, inner sep=0]

	\node (r) [node, fill=skyblue] {$r$};

	\node (w1) [node, fill=skyblue, above left=8mm and 6mm of r] {$w_1$};
	\node (w1p) [node, fill=orange, above left=6mm and -2mm of w1] {$w'_1$};
	\node (wdots) [right=2mm of w1] {\huge{$\dots$}};
	\node (wd) [node, fill=skyblue, above right=8mm and 6mm of r] {$w_d$};
	\node (wdp) [node, fill=orange, above right=6mm and -2mm of wd] {$w'_d$};

	\draw (r) edge [thick] (w1);
	\draw (w1) edge [thick] (w1p);
	\draw (r) edge [thick] (wd);
	\draw (wd) edge [thick] (wdp);

	\draw[ultra thick, rounded corners=3mm, color=skyblue] ($(w1.north west) + (-0.25, 0.25)$) rectangle ($(w1.south east) + (0.25, -0.25)$);
	\draw[ultra thick, rounded corners=3mm, color=orange] ($(w1p.north west) + (-0.25, 0.25)$) rectangle ($(w1p.south east) + (0.25, -0.25)$);

	\node (s1) [node, fill=bluegreen, below left=10mm and 20mm of r] {$s_1$};
	\node (s1p) [node, fill=bluegreen, below left=-0.425mm and 9.337mm of s1] {$s'_1$};
	\node (11) [node, fill=blue, below left=9mm and 3mm of s1] {$v_1^1$};
	\node (21) [node, fill=redpurple, below right=10mm and -3mm of s1] {$v_1^2$};

	\draw (r) edge [thick] (s1);
	\draw (s1) edge [thick] (s1p);
	\draw (s1) edge [thick] (11);
	\draw (s1) edge [thick] (21);

	\draw[ultra thick, rounded corners=6mm, color=bluegreen] ($(21.south east) + (0.25, -0.25)$) rectangle ($(s1p.north west) + (-0.25, 0.75)$);

	\node (s2) [node, fill=yellow, below left=10mm and -2mm of r] {$s_2$};
	\node (s2p) [node, fill=yellow, below left=10mm and 0mm of s2] {$s'_2$};
	\node (22) [node, fill=redpurple, below right=10mm and 0mm of s2] {$v_2^2$};

	\draw (r) edge [thick] (s2);
	\draw (s2) edge [thick] (s2p);
	\draw (s2) edge [thick] (22);

	\node (sdots) [right=5mm of s2] {\huge{$\dots$}};
	\node (sm) [node, fill=vermillion, below right=10mm and 20mm of r] {$s_m$};
	\node (smp) [node, fill=vermillion, below left=10mm and -3mm of sm] {$s'_m$};
	\node (1d) [node, fill=blue, below right=9mm and 3mm of sm] {$v_d^1$};
	\node (3d) [node, fill=white, below right=-0.425mm and 9.337mm of sm] {$v_d^3$};

	\draw (r) edge [thick] (sm);
	\draw (sm) edge [thick] (smp);
	\draw (sm) edge [thick] (1d);
	\draw (sm) edge [thick] (3d);

	\draw[ultra thick, rounded corners=6mm, color=vermillion] ($(3d.north east) + (0.25, 0.75)$) rectangle ($(smp.south west) + (-0.25, -0.25)$);
\end{tikzpicture}
		}
	\end{center}
	\caption{
		An instance of \uwgm{} constructed from a \scov{} instance where $S_1 = \{e_1, e_2\}$, $S_2 = \{e_2\}$, and $S_m = \{e_1, e_3\}$.
		The vertex labels match those used in Definition~\ref{def:depth_hardness}.
		Vertices are colored according to which candidate $\chi(v)$ they vote for.
		Part of a satisfying district-partition corresponding to a set cover including $S_1$ and $S_m$ is shown by the colored borders.
		Since each element candidate has a vertex in a district won by a set candidate, the preferred candidate wins the root district.
	}
	\label{fig:depth_hardness}
\end{figure}
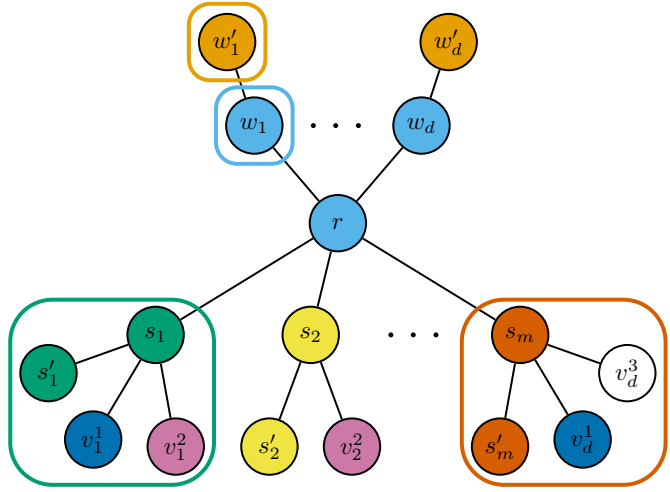

Now, we describe how to construct an instance of \uwgm{} from an instance of \scov{} (see Figure~\ref{fig:depth_hardness}).

\begin{definition} \label{def:depth_hardness}
	Let $(\Un, \F, t)$ be an instance of \scov{}.
	By Observation~\ref{obs:sc_freq}, we may assume that $f_e = d$ for all $e \in \Un$.
	Note that we will assume $d \geq 3$.
	We construct an instance $(G, C, \chi, p, k)$ of \uwgm{} as follows.
	Set $|C| = n+m+2$ and $k = t+3$.
	The \emph{element candidates} $a_1, \dots, a_n$ correspond to $e_1, \dots, e_n \in \Un$, and the \emph{set candidates} $b_1, \dots, b_m$ correspond to $S_1, \dots, S_m \in \F$.
	Finally, $p$ is the \emph{preferred candidate}, and $q$ is the \emph{adversary}.

	We construct the tree $G$ starting from its root vertex $r$; set $\chi(r) = p$.
	Add $d$ \emph{weight branches} to $r$ each consisting of 2 vertices, $w_i$ and $w'_i$, such that $w_i$ is adjacent to $r$.
	Set $\chi(w_i) = p$ and $\chi(w'_i) = q$ on each branch.
	For each set $S_i \in \F$, add another branch to $r$ with vertices $s_i$ and $s'_i$, and set $\chi(s_i) = \chi(s'_i) = b_i$.
	For each element $e_j$, add $d$ vertices $v_1^j, \dots, v_d^j$ to $G$, and connect one to each $s_i$ such that $e_j \in S_i$.
	Set $\chi(v_\ell^j) = a_j$.
\end{definition}

The construction of Definition~\ref{def:depth_hardness} works in two steps.
First, the adversary $q$ forces $p$ to win the root district.
Any district won by $p$ on a weight branch must be paired with another won by $q$, and so $p$ can only get ahead of $q$ by winning the root district.
Then, in order for $p$ to win the root district, at least one of the $d$ vertices voting for the element candidate $a_j$ must be in a separate district.
Since we must do this for every element with only $t$ districts, the only efficient solution corresponds to a set cover.
We begin by formally stating the first step.

\begin{observation} \label{obs:win_root}
	Let $(G, C, \chi, p, k)$ be an instance of \uwgm{} produced from an instance of \scov{} $(\Un, \F, t)$ according to Definition~\ref{def:depth_hardness}.
	Given a district-partition $\D$, if the preferred candidate $p$ does not win the root district (i.e.\ the district containing $r$), then $\D$ does not witness that $(G, C, \chi, p, k)$ is a YES-instance.
\end{observation}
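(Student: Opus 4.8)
The plan is to exploit how few vertices vote for $p$ or $q$ in the instance of Definition~\ref{def:depth_hardness}: candidate $p$ appears only on $r$ and the weight-branch tops $w_1, \dots, w_d$, and candidate $q$ appears only on the weight-branch leaves $w'_1, \dots, w'_d$. First I would pin down which districts $p$ can possibly win. A district won by $p$ must contain a $p$-voter; if it contains $r$ it is the root district, and otherwise it contains some $w_i$ but not $r$, so (being connected and avoiding $r$) it lies inside the connected component of $G - r$ containing $w_i$. By construction that component is exactly the pair $\{w_i, w'_i\}$, and the only connected subset of $\{w_i, w'_i\}$ that $p$ wins is $\{w_i\}$: the full pair $\{w_i, w'_i\}$ is a tie between $p$ and $q$, which $p$ does not win. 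Hence, apart from the root district, the only districts $p$ can win are singletons $\{w_i\}$.

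Second, I would charge each such win to a win of $q$. Fix a district-partition $\D$ in which $p$ does not win the root district, and let $I = \{ i : \{w_i\} \in \D \}$. By the previous step, $p$ wins at most $|I|$ districts. For each $i \in I$, the only neighbor of $w'_i$ is $w_i$, which forms its own district $\{w_i\}$; therefore the district of $\D$ containing $w'_i$ contains none of its neighbors and must equal $\{w'_i\}$, which is won (and hence led) by $q$. These singletons $\{w'_i\}$ are pairwise distinct, so $q$ leads at least $|I|$ districts. Combining the two bounds shows that $p$ wins at most as many districts as $q$ leads, so $\D$ does not witness that the constructed instance is a YES-instance.

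I do not expect a genuine obstacle here. The one point that needs care is the tie-breaking convention of Definition~\ref{def:wins}: when a weight branch $\{w_i, w'_i\}$ forms a single district it is a tie, so $p$ merely leads (does not win) it, while $q$ is still counted as a leader — this is precisely why such a district contributes nothing to $p$'s win count. The structural fact that deleting $r$ isolates each branch $\{w_i, w'_i\}$ both from the rest of the tree and from the other branches is immediate from the construction, and it is what keeps the case analysis short.
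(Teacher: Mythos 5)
Your proposal is correct and follows essentially the same argument as the paper: the only non-root districts $p$ can win are singletons $\{w_i\}$, each of which forces a singleton $\{w'_i\}$ led by $q$, so $p$ cannot out-win the leaders without winning the root district. Your treatment of the tie in a district $\{w_i, w'_i\}$ matches the paper's reasoning, just spelled out in more detail.
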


Observation~\ref{obs:win_root} follows from the fact that the only way to create a district in $G$ which does not contain $r$ but is still won by $p$ is to place $w_i$ alone in a district.
However, this necessitates a second district containing only $w'_i$, and thus the adversary $q$ must win as many districts as $p$.

\begin{theorem} \label{thm:gerry_w2}
	\uwgm{} is W[2]-hard in trees of depth 2 when parameterized by the number of districts $k$.
\end{theorem}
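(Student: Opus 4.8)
The plan is to verify that Definition~\ref{def:depth_hardness} is a parameterized reduction from \scov{}, which is W[2]-hard with respect to $t$~\cite{flum2006}, to \uwgm{}. The construction is computable in polynomial time, produces a tree of depth~$2$ with unit weights, and sets $k = t+3$, a computable function of $t$; by Observation~\ref{obs:sc_freq} we may assume a common frequency $d$, and by first padding $\F$ to raise the maximum frequency we may assume $d \geq 3$ (adjusting $t$ by an additive constant). It then remains to show that $(\Un, \F, t)$ is a YES-instance of \scov{} if and only if the constructed instance $(G, C, \chi, p, k)$ is a YES-instance of \uwgm{}.

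For the forward direction, given a cover $X \subseteq \F$ with $|X| = t$ (pad $X$ with arbitrary sets if $|X| < t$; possible since $t \leq |\F|$), I would build the district-partition consisting of: for each $S_i \in X$, the district holding $s_i$, $s'_i$, and every $v$-vertex attached to $s_i$, which $b_i$ wins with $2$ votes against at most $1$ for any element candidate; one split weight branch giving the singletons $\{w_j\}$ (won by $p$) and $\{w'_j\}$ (won by $q$); and a root district $D_0$ with everything else. This is exactly $t+3$ connected districts, and in $D_0$ the candidate $p$ gets $d$ votes ($r$ plus the remaining $d-1$ vertices $w_i$), beating $q$ (at most $d-1$), every set candidate (at most $2$), and every element candidate (at most $d-1$, since $X$ covers each element and so moved at least one of its $v$-vertices into an $X$-district). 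Hence $p$ wins two districts while every other candidate leads at most one, so $p$ wins the instance.

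For the reverse direction, let $\D$ be a satisfying partition; by Observation~\ref{obs:win_root}, $p$ wins the root district $D_0$. Since $w_i$ is adjacent only to $r$ and $w'_i$, and $w'_i$ only to $w_i$, each weight branch lies entirely in $D_0$, forms its own two-vertex district, splits into $\{w_i\}$ and $\{w'_i\}$, or places $w_i \in D_0$ with $\{w'_i\}$ alone. Let $A$ be the number of split branches. Then $p$ wins exactly $D_0$ and those $A$ singletons $\{w_i\}$ (its only vote-vertices are $r$ and the $w_i$), for $1+A$ wins, while $q$ leads every $\{w'_i\}$-singleton and every two-vertex weight district but not $D_0$. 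Comparing counts forces the two-vertex and ``$w_i\in D_0$, $\{w'_i\}$ alone'' configurations to be absent, so $q$ leads exactly $A$ districts, every candidate other than $p$ leads at most $A$ districts, and $A\geq 1$ (if $A=0$ then $p$ has no vote-vertex outside $D_0$, hence leads only $D_0$, yet the other $t+2$ districts are each led by someone else). Since the $d-A$ non-split branches lie in $D_0$, $p$ has $d-A+1$ votes there, so $p$ winning $D_0$ forces every element $e_j$ to have at least $A$ of its $d$ vertices $v_\ell^j$ outside $D_0$.

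The reduction is finished by counting. Deleting $r$ separates the set branches, so each district other than $D_0$ and the $2A$ weight-singletons is a set-district (one $s_i$ with some children), a singleton $\{v_\ell^j\}$, or a singleton $\{s'_i\}$; letting $Z$ index the set branches moved out of $D_0$ and $\sigma$ count the $v$-singletons, $|Z|+\sigma \leq t+2-2A$. Every element $e_j$ not all of whose outside vertices are $v$-singletons is covered by $\{S_i : i \in Z\}$. For every remaining element $e_j$, the candidate $a_j$ leads at least $A$ of the $v$-singletons; writing $J$ for the set of such elements, $\sigma \geq A|J|$, and adjoining one set containing each $e_j \in J$ gives a cover of size at most $|Z|+|J| \leq (t+2-2A-A|J|)+|J| = t+2-2A+(1-A)|J| \leq t$ since $A\geq 1$. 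I expect this reverse direction to be the crux: the ``wasteful'' configurations a partition might exploit --- stray $v$-singletons, $s'$-singletons, and tie districts on weight branches --- must all be shown harmless by bookkeeping the district budget against the number of districts each candidate leads, while tracking the free parameter $A$ throughout.
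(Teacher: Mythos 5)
Your proof is correct, and it uses the paper's reduction (Definition~\ref{def:depth_hardness}) with an essentially identical forward direction; the difference lies in how you extract a cover in the reverse direction. The paper simply takes $X = \{S_i : \text{some vertex in the subtree of } s_i \text{ lies outside the root district}\}$ and bounds $|X| \leq t$ in one stroke: two districts sit on a weight branch, and every other non-root district is connected in $G - r$, hence meets the subtree of at most one $s_i$. With that choice of $X$, a stray singleton $\{v_\ell^j\}$ already certifies that its parent set belongs to $X$ (and covers $e_j$), so no separate treatment of ``singleton-only'' elements is needed; non-coverage then forces all $d$ copies of some $a_j$ into the root district, contradicting Observation~\ref{obs:win_root}. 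Your route instead fixes the number $A$ of split weight branches, rules out the tie and half-split configurations by comparing $p$'s wins ($1+A$) with $q$'s leads, shows every element has at least $A$ displaced vertices, and then budgets the remaining $t+2-2A$ districts against $|Z|$ and the $v$-singletons to build a cover of size $|Z|+|J| \leq t$; the bookkeeping checks out, but it is considerably heavier than necessary, and the structural facts you re-derive (a split weight branch exists, hence at most $t$ districts remain for the set branches) are exactly what the paper's shorter argument isolates. One trivial remark: to ensure $d \geq 3$ you need not ``adjust $t$ by an additive constant''---padding $\F$ with single-element sets as in Observation~\ref{obs:sc_freq} preserves both the answer and $t$.
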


\begin{proof}
	Let $(\Un, \F, t)$ be an instance of \scov{}.
	We will prove the claim by showing that $(\Un, \F, t)$ is equivalent to the instance $(G, C, \chi, p, k)$ of \uwgm{} constructed according to Definition~\ref{def:depth_hardness}.
	First, we prove that a YES-instance of \scov{} produces a YES-instance of \uwgm{}.

	Let $X \subseteq \F$ be a set cover of size $t$ witnessing that $(\Un, \F, t)$ is a YES-instance.
	Note that we may assume $|X| = t$ since adding sets to a feasible cover cannot make it infeasible.
	Let $\D$ be the following district-partition of $G$.
	For each $S_i \in X$, create a district containing $s_i$ and all of its children; candidate $b_i$ wins the district created for $S_i$.
	Create one district containing only $w_1$ and another containing only $w'_1$; candidate $p$ and the adversary $q$ win these districts respectively.
	Place all remaining vertices (including the root) in the final district.

	Note that $\D$ contains exactly $k = t + 3$ districts, each of which is connected.
	We need only show that candidate $p$ wins the root district.
	Suppose not.
	There must exist a candidate which receives at least $d \geq 3$ votes since $p$ receives $d$ votes from $r$ and $w_2, \dots, w_d$.
	The set candidates can only receive two votes in the entire instance, and so it must be an element candidate.
	Without loss of generality, suppose it is candidate $a_j$ corresponding to element $e_j$.
	In order to receive $d$ votes, all $d$ element-vertices $v_1^j, \dots, v_d^j$ must be in the root district.
	The construction of $\D$ thus implies that $e_j \not\in \bigcup_{S_i \in X} S_i$, contradicting that $X$ is a set cover.
	Therefore, $p$ wins the root district.

	Now, we show that a YES-instance of \uwgm{} implies a YES-instance of \scov{}.
	Let $\D$ be a satisfying district-partition of $G$.
	Note that $\D$ must contain at least two distinct districts with a vertex voting for $p$.
	By construction, one of these districts is a subset of a single weight branch.
	Furthermore, this weight branch must actually contain two districts, since $p$ only leads a district containing both $w_i$ and $w'_i$.
	Let $X \subseteq F$ be the set containing each $S_i$ such that a vertex in the subtree rooted at $s_i$ does not appear in the root district of $\D$.
	Since two districts in $\D$ appear on a weight branch and non-root districts can only intersect the subtree of a single $s_i$, $|X| \leq t$.

	Suppose that $X$ is not a set cover of $(\Un, \F, t)$.
	Then there exists some element $e_j$ which is not contained by any set in $X$.
	By the construction of $X$, this implies that all of $v_1^j, \dots, v_d^j$ appear in the root district of $\D$.
	However, then candidate $a_j$ would receive $d$ votes in the root district, and so candidate $p$ could not have won (since it receives at most $d$ votes as well).
	This contradicts that $\D$ was a satisfying district-partition, since $p$ must win the root district in order to win a plurality of districts by Observation~\ref{obs:win_root}.
\end{proof}

\section{W[2]-Hardness in Trees with Few Leaves}

In this section, we prove that \uwgm{} is W[2]-hard on subdivided stars parameterized by the combined parameter of $k+\ell$, where $\ell$ is the number of leaves.
We again reduce from the \scov{} problem.

The general idea of our reduction is to create a subdivided star with $t$ main branches, each containing groups of vertices representing every set in $\F$.
Choosing some group on a branch as an endpoint of the root district corresponds to choosing a set to add to the cover.
The hardness of the problem lies in the many choices of endpoints for the root district on each branch.
If the root district chooses these boundaries such that the corresponding sets are a set cover, $p$ will win more districts than any other candidate.

To begin our reduction, we describe how to produce an instance of \uwgm{} from an instance of \scov{}.
Figure~\ref{fig:leaf_hardness} demonstrates an example of the construction.

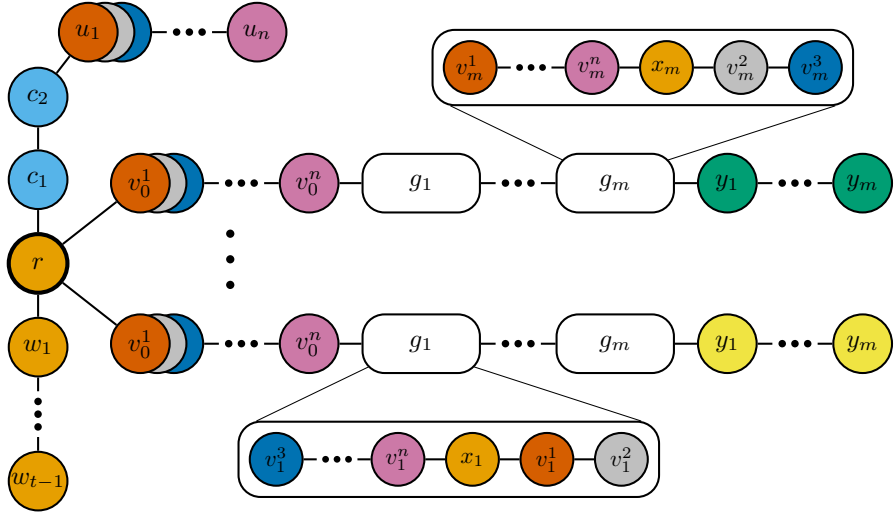
\begin{figure}[t]
    \resizebox{\textwidth}{!}{
        \begin{tikzpicture}
    \tikzstyle{node} = [circle, draw, thick, minimum size=8mm, inner sep=0pt]
    \tikzstyle{gnode} = [rectangle, draw, thick, minimum width=16mm, minimum height=8mm, rounded corners=3mm]

    \node (r) [node, fill=orange, ultra thick] {$r$};

    \node (c1) [node, fill=skyblue, above=3mm of r] {$c_1$};
    \node (c2) [node, fill=skyblue, above=3mm of c1] {$c_2$};

    \node (un) [node, fill=redpurple, above right=3mm and 24mm of c2] {$u_n$};
    \node (ud) [left=2mm of un, inner sep=0pt] {\bolddots};
    \node (u3) [node, fill=blue, left=2mm of ud] {$u_3$};
    \node (u2) [node, fill=lightgray, left=-6mm of u3] {$u_2$};
    \node (u1) [node, fill=vermillion, left=-6mm of u2] {$u_1$};

    \draw (r) edge [thick] (c1);
    \draw (c1) edge [thick] (c2);
    \draw (c2) edge [thick] (u1);
    \draw (u3) edge [thick] (ud);
    \draw (ud) edge [thick] (un);

    \node (w1) [node, fill=orange, below=3mm of r] {$w_1$};
    \node (wd) [circle, below=5mm of w1, anchor=center, inner sep=0pt, rotate=90] {\bolddots};
    \node (wt) [node, fill=orange, below=5mm of wd.center] {$w_{t-1}$};

    \draw (r) edge [thick] (w1);
    \draw (w1) edge [thick] (wd);
    \draw (wd) edge [thick] (wt);

    \node (v0n1) [node, fill=redpurple, above right=5mm and 31mm of r] {$v_0^n$};
    \node (v0d1) [left=2mm of v0n1, inner sep=0pt] {\bolddots};
    \node (v031) [node, fill=blue, left=2mm of v0d1] {$v_0^3$};
    \node (v021) [node, fill=lightgray, left=-6mm of v031] {$v_0^2$};
    \node (v011) [node, fill=vermillion, left=-6mm of v021] {$v_0^1$};

    \node (g11) [gnode, right=3mm of v0n1] {$g_1$};
    \node (gd1) [right=2mm of g11, inner sep=0pt] {\bolddots};
    \node (gm1) [gnode, right=2mm of gd1] {$g_m$};

    \node (y11) [node, fill=bluegreen, right=3mm of gm1] {$y_1$};
    \node (yd1) [right=2mm of y11, inner sep=0pt] {\bolddots};
    \node (ym1) [node, fill=bluegreen, right=2mm of yd1] {$y_m$};

    \draw (r) edge [thick] (v011);
    \draw (v031) edge [thick] (v0d1);
    \draw (v0d1) edge [thick] (v0n1);
    \draw (v0n1) edge [thick] (g11);
    \draw (g11) edge [thick] (gd1);
    \draw (gd1) edge [thick] (gm1);
    \draw (gm1) edge [thick] (y11);
    \draw (y11) edge [thick] (yd1);
    \draw (yd1) edge [thick] (ym1);

    \node (dots) [below right=-5mm and 0mm of v031, scale=2.5] {\Large{$\vdots$}};

    \node (vm1) [node, scale=0.9, fill=vermillion, above left=9mm and 9mm of gm1] {$v_m^1$};
    \node (vmd) [scale=0.9, right=1.8mm of vm1, inner sep=0pt] {\bolddots};
    \node (vmn) [node, scale=0.9, fill=redpurple, right=1.8mm of vmd] {$v_m^n$};
    \node (xm) [node, scale=0.9, fill=orange, right=2.7mm of vmn] {$x_m$};
    \node (vm2) [node, scale=0.9, fill=lightgray, right=2.7mm of xm] {$v_m^2$};
    \node (vm3) [node, scale=0.9, fill=blue, right=2.7mm of vm2] {$v_m^3$};

    \draw (vm1) edge [thick] (vmd);
    \draw (vmd) edge [thick] (vmn);
    \draw (vmn) edge [thick] (xm);
    \draw (xm) edge [thick] (vm2);
    \draw (vm2) edge [thick] (vm3);

    \draw ($(gm1.north west) + (0.1, -0.1)$) -- ($(vm1.south) + (-0.27, -0.15)$);
    \draw ($(gm1.north east) + (-0.1, -0.1)$) -- ($(vm3.south) + (0.27, -0.15)$);
    \draw[thick, rounded corners=3mm] ($(vm1.north west) + (-0.25, 0.25)$) rectangle ($(vm3.south east) + (0.25, -0.25)$);

    \node (v0n2) [node, fill=redpurple, below right=5mm and 31mm of r] {$v_0^n$};
    \node (v0d2) [left=2mm of v0n2, inner sep=0pt] {\bolddots};
    \node (v032) [node, fill=blue, left=2mm of v0d2] {$v_0^3$};
    \node (v022) [node, fill=lightgray, left=-6mm of v032] {$v_0^2$};
    \node (v012) [node, fill=vermillion, left=-6mm of v022] {$v_0^1$};

    \node (g12) [gnode, right=3mm of v0n2] {$g_1$};
    \node (gd2) [right=2mm of g12, inner sep=0pt] {\bolddots};
    \node (gm2) [gnode, right=2mm of gd2] {$g_m$};

    \node (y12) [node, fill=yellow, right=3mm of gm2] {$y_1$};
    \node (yd2) [right=2mm of y12, inner sep=0pt] {\bolddots};
    \node (ym2) [node, fill=yellow, right=2mm of yd2] {$y_m$};

    \draw (r) edge [thick] (v012);
    \draw (v032) edge [thick] (v0d2);
    \draw (v0d2) edge [thick] (v0n2);
    \draw (v0n2) edge [thick] (g12);
    \draw (g12) edge [thick] (gd2);
    \draw (gd2) edge [thick] (gm2);
    \draw (gm2) edge [thick] (y12);
    \draw (y12) edge [thick] (yd2);
    \draw (yd2) edge [thick] (ym2);

    \node (v13) [node, scale=0.9, fill=blue, below left=9mm and 9mm of g12] {$v_1^3$};
    \node (v1d) [scale=0.9, right=1.8mm of v13, inner sep=0pt] {\bolddots};
    \node (v1n) [node, scale=0.9, fill=redpurple, right=1.8mm of v1d] {$v_1^n$};
    \node (x1) [node, scale=0.9, fill=orange, right=2.7mm of v1n] {$x_1$};
    \node (v11) [node, scale=0.9, fill=vermillion, right=2.7mm of x1] {$v_1^1$};
    \node (v12) [node, scale=0.9, fill=lightgray, right=2.7mm of v11] {$v_1^2$};

    \draw (v13) edge [thick] (v1d);
    \draw (v1d) edge [thick] (v1n);
    \draw (v1n) edge [thick] (x1);
    \draw (x1) edge [thick] (v11);
    \draw (v11) edge [thick] (v12);

    \draw ($(g12.south west) + (0.1, 0.1)$) -- ($(v13.north) + (-0.27, 0.15)$);
    \draw ($(g12.south east) + (-0.1, 0.1)$) -- ($(v12.north) + (0.27, 0.15)$);
    \draw[thick, rounded corners=3mm] ($(v13.north west) + (-0.25, 0.25)$) rectangle ($(v12.south east) + (0.25, -0.25)$);
\end{tikzpicture}
    }
    \caption{
        A \uwgm{} instance constructed from a \scov{} instance according to Definition~\ref{def:unsplit}.
        Each vertex $v$ is colored according to which candidate $\chi(v)$ it votes for.
        Specifically, light blue indicates $p$ and orange indicates $q$.
        Each set $S_i$ is represented by the group $g_i$ of $n+1$ vertices.
        In this example, $S_1 = \{e_1, e_2\}$ and $S_m = \{e_2, e_3\}$.
        Thus, $v_m^2$ and $v_m^3$ appear after $x_m$ in group $g_m$ (shown in the top branch), and $v_1^1$ and $v_1^2$ appear after $x_1$ (shown in the bottom branch).
    }
    \label{fig:leaf_hardness}
\end{figure}

\begin{definition} \label{def:unsplit}
    Given an instance $(\Un, \F, t)$ of \scov{}, let $(G, C, \chi, p, k)$ be the following instance of \uwgm{}.
    Set $|C| = n+t+2$ and $k = t+4$.
    The \emph{element candidates} $a_1, \dots, a_n$ correspond to $e_1, \dots, e_n \in \Un$.
    The \emph{branch candidates} $b_1, \dots, b_t$ each appear on their own branch.
    Finally, $p$ is the \emph{preferred candidate}, and $q$ is the \emph{ally}.

    Let $G$ be the following subdivided star with root vertex $r$ and $\ell = t+2$ branches.
    Set $\chi(r) = q$.
    Starting at the root, the first branch of $G$ consists of $c_1$ and $c_2$ followed by $u_1, \dots, u_n$.
    Set $\chi(c_1) = \chi(c_2) = p$, and set $\chi(u_j) = a_j$.
    The second branch contains $w_1, \dots, w_{t-1}$ with $\chi(w_i) = q$.
    The other $t$ (nearly identical) \emph{selection branches} consist of the following vertices:
    \begin{itemize}
        \item
        $x_1, \dots, x_m$ such that $\chi(x_i) = q$

        \item
        $v_0^j, \dots, v_m^j$ such that $\chi(v_i^j) = a_j$ for each element $e_j \in \Un$

        \item
        $y_0, \dots, y_m$ such that $\chi(y_i) = b_s$ on branch $s \in [t]$
    \end{itemize}

    The vertices $v_0^1, \dots, v_0^n$ begin the branch, and $y_0, \dots, y_m$ end the branch.
    The remaining vertices are organized into $m$ groups $g_1, \dots, g_m$ such that $g_i$ contains $x_i$ and $v_i^1, \dots, v_i^n$.
    Within a group, $v_i^j$ appears before $x_i$ if and only if $e_j \not\in S_i$.
    Thus, exactly $i$ vertices before $x_i$ vote for candidate $a_j$ if element $e_j \in S_i$, and $i+1$ otherwise.
    This same construction is repeated $t$ times with only $\chi(y_i)$ differing between branches.
    Note that $G$ contains $O(tnm)$ vertices which is polynomial in the \scov{} instance size.
\end{definition}

To prove that Definition~\ref{def:unsplit} produces equivalent instances, we first observe that the vertices $c_1$ and $c_2$ must be placed in two districts of size one.
This follows from the fact that they are the only two vertices in $G$ which vote for $p$.
Since $k > 1$, $p$ must win at least two districts in order to win a plurality.
This leads to the following observation about the root district.

\begin{observation} \label{obs:unsplit}
    Let $(G, C, \chi, p, k)$ be an instance of \uwgm{} produced from an instance $(\Un, \F, t)$ of \scov{} according to Definition~\ref{def:unsplit}.
    Given a district-partition $\D$, if the ally $q$ does not win the root district (i.e.\ the district containing $r$), then $\D$ does not witness that $(G, C, \chi, p, k)$ is a YES-instance.
\end{observation}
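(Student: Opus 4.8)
The plan is to prove the contrapositive: if $\D$ witnesses that the constructed instance is a YES-instance, then the ally $q$ wins the root district $D_r$ (the district of $\D$ containing $r$). I will lean on the structure established just before the statement: $\{c_1\}$ and $\{c_2\}$ are forced to be singleton districts of $\D$ and are the only districts $p$ wins, and since a YES-witness must have $p$ win at least two districts, \emph{every} candidate other than $p$ leads at most one district of $\D$. From the same forced structure I get a ``tie-down'' for the element candidates: deleting the singletons $c_1,c_2$ disconnects $u_1,\dots,u_n$ from the rest of $G$, so any district of $\D$ meeting this set is a subpath $\{u_a,\dots,u_b\}$ on which the unit weights make $a_a,\dots,a_b$ all tie. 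Hence each element candidate $a_j$ already leads the district containing $u_j$, so by the one-leadership bound it leads \emph{only} that district; in particular $a_j\notin\mathrm{top}(D_r)$. Together with $p\notin\mathrm{top}(D_r)$ (no $p$-voter lies in $D_r$), this shows every leader of $D_r$ lies in $\{q,b_1,\dots,b_t\}$.

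With that in place, here is the core of the argument. Suppose for contradiction that $q$ does not win $D_r$; then $\mathrm{top}(D_r)\neq\{q\}$, so some branch candidate $b_s$ leads $D_r$. The only $b_s$-voters are $y_0,\dots,y_m$, which sit at the far end of selection branch $s$, and since $D_r$ is connected through $r$, reaching any of them forces $D_r$ to contain the \emph{entire} branch $s$, in particular all $m$ of its $q$-voting vertices $x_1,\dots,x_m$. So $D_r$ contains at least $m+1$ vertices voting $q$ (these $x_i$ plus $r$) and at most $m+1$ voting $b_s$. As $b_s$ is a leader, its count is at least $q$'s, so both equal exactly $m+1$; consequently $D_r$ contains \emph{no other} $q$-voter, in particular no $w_i$.

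To finish, I use the second branch. We may assume $t\ge 2$ (otherwise the \scov{} instance is polynomial-time solvable), so $\{w_1,\dots,w_{t-1}\}$ is nonempty and, by the previous paragraph, disjoint from $D_r$; being a subpath of $q$-voters it is split into one or more districts, each won by $q$. But the equality of counts above also puts $q\in\mathrm{top}(D_r)$, so $q$ leads $D_r$ as well. Thus $q$ leads at least two districts, contradicting the one-leadership bound, and we conclude $q$ must win $D_r$.

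I expect the middle step to be the main obstacle: showing that the leader of the root district cannot be a branch candidate. The key realization is that the $y$-vertices are shielded behind an entire block of $q$-voting $x$-vertices, so ``$b_s$ leads $D_r$'' immediately pins $q$'s tally in $D_r$ to the largest value any branch candidate can reach; the $w$-branch then exists precisely to convert the resulting tie into a contradiction through the leadership budget. The one slightly delicate piece is the bookkeeping forcing $q$'s tally in $D_r$ to be \emph{exactly} $m+1$, since it is that exactness which prevents $D_r$ from absorbing the $w$-branch and thereby manufactures a second $q$-led district.
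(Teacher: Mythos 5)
Your argument is correct, and its scaffolding matches the paper's: the forced singleton districts $\{c_1\},\{c_2\}$, the fact that each element candidate $a_j$ already leads the district containing $u_j$ on the first branch, and hence $a_j \notin \mathrm{top}(D_r)$ and $p \notin \mathrm{top}(D_r)$, so any leader of the root district is $q$ or a branch candidate. Where you diverge is the endgame. The paper kills a leading branch candidate $b_s$ by comparing it with the \emph{element} candidates: since reaching any $y$-vertex forces $D_r$ to absorb all of selection branch $s$, every $a_j$ collects $m+1$ votes there, which is at least the total weight of $b_s$; thus $b_s \in \mathrm{top}(D_r)$ would put some $a_j$ in $\mathrm{top}(D_r)$ as well, contradicting the already-established fact that element candidates cannot lead the root district. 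You instead compare $b_s$ with $q$, pin both tallies in $D_r$ to exactly $m+1$, deduce that $D_r$ avoids the $w$-branch, and manufacture a second $q$-led district there to break the one-leadership budget. Both arguments are sound, but your route needs the $w$-branch to be nonempty, i.e.\ the extra assumption $t \geq 2$ (which you flag and which is harmless for the reduction, though the observation as stated carries no such hypothesis), and it is somewhat longer. In fact your own intermediate step already hands you the paper's shortcut: once $D_r$ contains all of branch $s$, each $a_j$ has $m+1 \geq$ (votes of $b_s$) in $D_r$, contradicting $a_j \notin \mathrm{top}(D_r)$ directly, with no case distinction on $t$ and no use of the $w$-branch.
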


Since $c_1$ and $c_2$ must appear in their own districts, each element candidate leads a district on the first branch, regardless of how the vertices are divided in $\D$.
As a result, an element candidate cannot lead the root district in a satisfying district-partition.
Since branch candidates do not have enough weight to even tie element candidates, this implies that $q$ must win the root district.

\begin{theorem} \label{thm:unsplit}
    \uwgm{} is W[2]-hard in subdivided stars with $\ell$ leaves when parameterized by the combined parameter $k+\ell$.
\end{theorem}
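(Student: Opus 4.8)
The plan is to show that the construction of Definition~\ref{def:unsplit} is a parameterized reduction from \scov{}. It runs in polynomial time, and since $k=t+4$ and $\ell=t+2$ we have $k+\ell=2t+6=O(t)$, so it suffices to prove that $(\Un,\F,t)$ is a YES-instance of \scov{} if and only if the constructed instance $(G,C,\chi,p,k)$ is a YES-instance of \uwgm{}. I will lean on three facts. The vertices $c_1,c_2$ must each form a district of size one, so $p$ wins exactly the two districts $\{c_1\}$ and $\{c_2\}$. By Observation~\ref{obs:unsplit}, the ally $q$ is the unique leader of the root district $R$ in any satisfying partition. And every element candidate $a_j$ leads the district on the first branch that contains $u_j$ (a tie among the $a$-candidates of that district), so in a satisfying partition no $a_j$ may lead a second district.

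For the forward direction, let $X=\{S_{j_1},\dots,S_{j_t}\}$ be a cover of size exactly $t$ (padding with arbitrary sets if necessary), and assign $S_{j_s}$ to selection branch $s$. Take the partition consisting of $\{c_1\}$, $\{c_2\}$, the whole path $u_1,\dots,u_n$, the suffix $D_s$ of each selection branch $s$ beginning just after $x_{j_s}$ (so $D_s$ contains $y_0,\dots,y_m$), and the district $R$ made up of $r$, the entire second branch, and the prefix of each selection branch $s$ up to and including $x_{j_s}$. This is a partition into $2+1+1+t=k$ connected districts. In $R$, candidate $q$ receives $1+(t-1)+\sum_s j_s$ votes while $a_j$ receives $\sum_s(j_s+[e_j\notin S_{j_s}])=\sum_s j_s+t-|\{s:e_j\in S_{j_s}\}|$ votes, which is strictly smaller because $X$ is a cover; since $p$ and every $b_s$ receive no votes in $R$, candidate $q$ wins $R$. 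In $D_s$, candidate $b_s$ receives $m+1$ votes while $a_j$ receives $[e_j\in S_{j_s}]+(m-j_s)\le m$ and $q$ receives $m-j_s$, so $b_s$ wins $D_s$. Hence $p$ wins two districts and no other candidate leads more than one, so the instance is a YES-instance.

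For the converse, let $\D$ be a satisfying partition. I first claim that $R$ contains the entire initial block $v_0^1,\dots,v_0^n$ of every selection branch $s$. If not, let $v_0^{l}$ be the first of these vertices outside $R$; then $v_0^{l}$ together with all later vertices of branch $s$ is cut off from $r$, so the district $D_1$ containing $v_0^{l}$ is a subpath of branch $s$. Along any prefix of a selection branch, any candidate $a_j$ with $v_0^j$ in that prefix has at least as many votes as $q$, so $a_{l}$ is at least tied with $q$ in $D_1$; and if $D_1$ reaches the $y$-vertices then $D_1$ contains all of $y_0,\dots,y_m$ and $a_{l}$ ties $b_s$ for the lead. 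Either way $D_1$ is led by an element candidate, which would then lead two districts---a contradiction. Now, for each selection branch $s$, let $i_s$ be the number of its $x$-vertices lying in $R$, and select the set $S_{i_s}$ (taking $S_1$ when $i_s=0$); at most $t$ sets are selected. Since $R$ contains the initial block and exactly the first $i_s$ of branch $s$'s $x$-vertices, whenever $e_j$ is not in the set selected for branch $s$ the candidate $a_j$ receives at least $i_s+1$ votes from branch $s$ within $R$. Thus, if some $e_j$ were uncovered, $a_j$ would receive at least $\sum_s(i_s+1)=t+\sum_s i_s$ votes in $R$, whereas $q$ receives only $1+W+\sum_s i_s\le t+\sum_s i_s$ votes (where $W\le t-1$ counts the second-branch vertices in $R$), contradicting that $q$ wins $R$. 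So the selected sets cover $\Un$, completing the reduction.

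I expect the crux to be the converse direction, and in particular the claim that $R$ must contain the whole initial block of every selection branch. This rests on the monotonicity observation that along any prefix of a selection branch the element candidates never fall behind $q$, together with the separate treatment of the case where the leftover suffix forms a single district (where $b_s$ only ties, rather than beats, the stranded element candidates). Once that structural fact is in hand, the root-district vote count is routine and, conveniently, is insensitive both to exactly where inside a group each branch is cut and to branches that $R$ swallows entirely, so no separate normalization of the partition is needed.
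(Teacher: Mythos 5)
Your proof is correct and follows essentially the same route as the paper: the same forward partition (singletons $\{c_1\},\{c_2\}$, the $u$-path, a suffix district per chosen set, everything else in the root district) and the same converse extraction of a cover from where the root district ends on each selection branch, with the vote count pitting an uncovered element candidate against $q$ via Observation~\ref{obs:unsplit}. Your added structural step---that the root district must absorb each branch's initial $v_0$-block, since otherwise a stranded subpath district hands some element candidate a second led district---is left implicit in the paper's proof and is what makes the counting go through even for branches contributing no $x$-vertex to the root district; the only nit is that your sub-claim that a district reaching the $y$-vertices must contain all of $y_0,\dots,y_m$ is unjustified, but it is also unnecessary, since $a_l$ ties or beats $b_s$ in either case.
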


\begin{proof}
    Let $(\Un, \F, t)$ be an instance of \scov{}.
    We will prove the claim by showing that $(\Un, \F, t)$ is equivalent to the instance $(G, C, \chi, p, k)$ of \uwgm{} constructed according to Definition~\ref{def:unsplit}.
    First, we will prove that a YES-instance of \scov{} produces a YES-instance of \uwgm{}.

    Let $X \subseteq \F$ be a set cover of size $t$.
    Note that we may assume $|X| = t$ since adding sets to a feasible cover cannot make it infeasible.
    Let $\D$ be the following district-partition of $G$.
    Create a district containing only $c_1$, a district containing only $c_2$, and a district containing $u_1, \dots, u_n$.
    Thus, $p$ wins two districts, and each element candidate leads one.
    For each set $S_i \in X$, create a district on one of the selection branches containing all of the vertices after $x_i$ on the branch.
    Each of these districts are won by a branch candidate.
    Place the remaining vertices in the root district.

    Note that $\D$ contains exactly $k=t+4$ districts, each of which is connected.
    We need only show that $q$ wins the root district.
    Suppose not; then, there exists an element candidate $a_j$ which receives at least as many votes as $q$.
    Between the root and the second branch, $q$ begins with $t$ more votes than $a_j$.
    By the construction of the selection branches, $a_j$ receives one additional vote relative to $q$ on each branch where $e_j \not\in S_i$.
    In order to make up all $t$ votes, $e_j \not\in S_i$ for all $S_i \in X$, contradicting that $X$ is a set cover.
    Thus, $q$ wins the root district, $p$ is the only candidate to win or lead more than one district, and $(G, C, \chi, p, k)$ is a YES-instance.

    Now, we show that a YES-instance of \uwgm{} implies a YES-instance of \scov{}.
    Let $\D$ be a satisfying district-partition of $G$.
    Construct the corresponding set cover $X$ in the following manner.
    For each selection branch, add $S_i$ to $X$ if $x_i$ is the last vertex from $x_1, \dots, x_m$ to appear in the root district.
    Note that $|X| \leq t$ since there are only $t$ selection branches.

    Suppose that $X$ is not a feasible set cover of $(\Un, \F, t)$.
    Then, there exists an element $e_j$ such that $e_j \not\in S_i$ for all $S_i \in X$.
    This implies that candidate $a_j$ receives at least one more vote than $q$ on every selection branch.
    Thus, $q$ cannot win the root district since it only has weight $t$ outside the selection branches.
    This is a contradiction since $q$ must win the root district in a satisfying district-partition by Observation~\ref{obs:unsplit}.
    Therefore, $X$ must be a feasible set cover.
\end{proof}

\section{FPT Algorithm w.r.t.\ $k$ in Trees with Few Leaves}
\label{sec:xp_alg}

In this section, we provide an algorithm to solve \wgm{} that is FPT with respect to the number of districts $k$ and XP with respect to the number of leaves $\ell$.
\wgm{} is a generalization of \gm{} which allows a vertex to vote for multiple candidates.
In this setting, $C$ denotes the set of candidates, and $w(v)$ is a $|C|$-dimensional vector whose $i$-th component is the number of votes for candidate $i$.
The definitions of \emph{wins} and \emph{leads} are analogous to \gm{}.

\problembox{Weighted Gerrymandering}
{A graph $G = (V,E)$, a set of candidates $C$, a weight function $w: V \rightarrow C \times \mathbb{N}$, a preferred candidate $p$, and an integer $k \in \mathbb{N}$.}
{Is there a district-partition of $V$ into $k$ districts such that $p$ wins more districts than any other candidate leads?}

We use a modified version of the FPT algorithm for paths from Gupta et al.~\cite{gupta2021}.
Their algorithm creates an instance $(H, s, t, k, k^*)$ of \lp{} which, given a partially edge-labeled directed graph $H$ with vertices $s$ and $t$, asks if there exists an $st$-path with exactly $k$ internal vertices such that no edge label is used more than $k^*$ times.
They construct this instance by creating one vertex for each of the $\binom{n}{2}$ possible districts in $G$, connecting vertices in $H$ corresponding to adjacent districts in $G$, and labeling those edges by the winner of the preceding district (unless $p$ wins).
In this way, a satisfying district-partition of $G$ corresponds to a satisfying $st$-path in $H$ and vice versa.
We slightly modify their construction to solve on a collection of disjoint paths.

Note that the algorithm given by Gupta et al.~\cite{gupta2021} requires a tie-breaking rule $\eta$ as a parameter to the problem.
For any district $D \in V(G)$, the rule $\eta$ must declare a distinct winner from the set $\argmax_{q \in C}\{ \sum_{v \in D} w(v)[q]\}$.
Their algorithm applies this rule to ensure that no district has more than one winner.
The details in our algorithm don't directly apply this rule, as all district decisions are made using Corollary ~\ref{cor:disjoint_fpt_paths}.

\begin{corollary} \label{cor:disjoint_fpt_paths}
    Let $(G, C, \chi, w, p, k)$ be an instance of \wgm{} with a tie-breaking rule $\eta$.
    If $G$ is a path forest (one or more disconnected paths), then there exists an algorithm which can decide $(G, C, \chi, w, p, k)$ in time $O(2.619^k (n + m)^{O(1)})$.
\end{corollary}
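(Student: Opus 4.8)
The plan is to lift the single-path reduction of Gupta et al.~\cite{gupta2021} to a path forest by building one \lp{} gadget per component, joining the gadgets in series into a single \lp{} instance, and then invoking their algorithm as a black box. The only genuinely new work is the series composition and re-checking that the parameter is preserved; everything else is inherited.

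Recall that for a single path $P$ on vertices $p_1, \dots, p_{n'}$, Gupta et al.\ form a directed, partially edge-labeled graph $H_P$ with a source $s_P$, a sink $t_P$, and one vertex $d_{i,j}$ for each interval $\{p_i, \dots, p_j\}$ of $P$ (a candidate district): there are arcs $s_P \to d_{1,j}$ for every $j$, arcs $d_{i,j} \to d_{j+1,j'}$ for every $j' \ge j+1$, and arcs $d_{i,n'} \to t_P$, and every arc leaving a district vertex $d_{i,j}$ (including the one into $t_P$) is labeled by the $\eta$-winner of the district $\{p_i,\dots,p_j\}$, except that it is left unlabeled when $p$ wins that district. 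Then $s_P t_P$-paths with exactly $\kappa$ internal vertices correspond bijectively to partitions of $P$ into $\kappa$ contiguous (hence connected) districts, and the multiset of labels read along such a path records exactly which non-preferred candidate wins — or, after $\eta$-tie-breaking, leads — each district. Together with the appropriate choice of $k^*$ (and the guessing that accompanies it), a satisfying district-partition of $P$ is equivalent to a solution of the resulting \lp{} instance; I will reuse this correctness argument verbatim, including its handling of the tie-breaking rule $\eta$ and of the count of $p$-districts.

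Now let $G$ be a path forest with components $P_1, \dots, P_c$ in a fixed order, $n_i = |V(P_i)|$. Build each $H_{P_i}$ as above but discard its local source and sink; instead add a single global source $s$ with arcs to every vertex $d^{(1)}_{1,j}$, a single global sink $t$ with arcs from every vertex $d^{(c)}_{i,n_c}$, and, for each $i<c$, arcs from every ``final'' district vertex $d^{(i)}_{i',n_i}$ of $P_i$ directly to every ``initial'' district vertex $d^{(i+1)}_{1,j}$ of $P_{i+1}$, each labeled (or not) according to the district of $P_i$ it leaves, exactly as in the single-path gadget. Since the gadgets are vertex-disjoint and linked only final-to-initial, every $st$-path traverses $H_{P_1},\dots,H_{P_c}$ in order; its internal vertices are precisely the chosen districts — one contiguous partition of each component — and, crucially, no ``junction'' vertices are introduced, so the path has exactly $k$ internal vertices if and only if the corresponding district-partition of all of $G$ uses exactly $k$ districts. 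The label multiset along the path is again exactly the multiset of non-preferred winners/leaders over all districts of $G$. Hence the \lp{} instance $(H,s,t,k,k^*)$, with the same choice of $k^*$ and the same guessing as in the single-path case, is a YES-instance if and only if $(G,C,\chi,w,p,k)$ is a YES-instance of \wgm{}.

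For the running time, $H$ has $O(\sum_i n_i^2)=O(\n^2)$ vertices and polynomially many arcs, and the only overhead beyond constructing $H$ is a polynomial number of calls to the \lp{} algorithm (as in~\cite{gupta2021}), each costing $O(2.619^k(\n+\m)^{O(1)})$; the total is therefore $O(2.619^k(\n+\m)^{O(1)})$. The one point that needs care — and the only place this proof does more than cite~\cite{gupta2021} — is the series composition: the cross-component arcs must run directly between district vertices rather than through retained local sources/sinks, so that the ``$k$ internal vertices'' constraint of \lp{} still counts exactly the number of districts; inserting even $c-1$ spurious internal vertices would effectively change the parameter from $k$ to $k+c-1$ and break the $2.619^k$ bound when $c$ is comparable to $k$. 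Verifying this, together with the claim that an $st$-path cannot ``skip'' or ``re-enter'' a component, is the main obstacle.
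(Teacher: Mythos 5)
Your proposal is correct and follows essentially the same route as the paper: the paper realizes your ``series composition'' implicitly by numbering the vertices consecutively path-by-path and only creating \lp{} district vertices for intervals whose endpoints lie in the same path, which yields exactly your final-to-initial linking with no spurious internal vertices, and then it likewise invokes the correctness and running-time analysis of Gupta et al.\ as a black box. The point you flag as ``the main obstacle'' (no skipped or re-entered components, and the internal-vertex count equaling the number of districts) is already settled by your own construction, just as in the paper's.
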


\begin{proof}
    We construct a similar instance of \lp{} as in~\cite{gupta2021}.
    First, number the vertices of $G$ such that $[1, a]$ are in the first path, then $[a+1, b]$ are in the next path, and so forth.
    For any of the $\binom{n}{2}$ ``district'' vertices that would usually be created,
    only create the vertex if the endpoints $i$ and $j$ are both contained in the same path in $G$.
    We add labeled edges between these vertices in the same manner as~\cite{gupta2021}.
    By only creating vertices corresponding to legal districts of $G$, feasible solutions to the \lp{} instance must respect the disconnected structure of $G$.
    The remainder of the argument follows~\cite{gupta2021}.
\end{proof}

\begin{figure}[t]
    \begin{center}
        \begin{tikzpicture}
    \tikzstyle{node} = [circle, draw, thick, minimum size=0.5cm]
	\tikzstyle{edge} = [thick]

    \node (r) [node, fill=skyblue, ultra thick] {};
    \node (l) [node, fill=white, above=0.25cm of r] {};

    \node (p1) [node, fill=orange, left=0.25cm of r] {};
    \node (p2) [node, fill=yellow, left=0.25cm of p1] {};
    \node (b1) [node, fill=bluegreen, left=0.25cm of p2] {};
    \node (b11) [node, fill=white, above left=0.25cm and 0.15cm of b1] {};
    \node (b12) [node, fill=white, below left=0.25cm and 0.15cm of b1] {};

    \node (b2) [node, fill=white, right=0.3cm of r] {};
    \node (b21) [node, fill=white, above right=0.25cm and 0.15cm of b2] {};
    \node (b22) [node, fill=white, right=0.3cm of b2] {};
    \node (b23) [node, fill=white, below right=0.25cm and 0.15cm of b2] {};

    \draw (r) edge [edge] (l);
    \draw (r) edge [edge] (p1);
    \draw (r) edge [edge] (b2);

    \draw (p1) edge [edge] (p2);
    \draw (p2) edge [edge] (b1);
    \draw (b1) edge [edge] (b11);
    \draw (b1) edge [edge] (b12);

    \draw (b2) edge [edge] (b21);
    \draw (b2) edge [edge] (b22);
    \draw (b2) edge [edge] (b23);

    \draw[ultra thick, rounded corners=3mm, densely dotted, color=black] ($(p1.north west) + (-0.15, 0.15)$) rectangle ($(r.south east) + (0.15, -0.15)$);
    \draw[ultra thick, rounded corners=3mm, loosely dashed, color=black] ($(b1.north west) + (-0.15, 0.25)$) rectangle ($(r.south east) + (0.25, -0.25)$);

    \draw[->, ultra thick, color=black, loosely dashed] ($(p2.south west) + (-0.3, -0.5)$) -> ($(p2.south west) + (-1, -1.3)$);
    \draw[->, ultra thick, color=black, densely dotted] ($(p1.south east) + (0.4, -0.5)$) -> ($(p1.south east) + (1.2, -1.3)$);
\end{tikzpicture}
        \vspace{0.1cm}

        \begin{tikzpicture}
    \tikzstyle{node} = [circle, draw, thick, minimum size=0.5cm]
	\tikzstyle{edge} = [thick]

    \node (b1) [node, ultra thick]  {};

    \node (l) [node, fill=white, above=0.25cm of r] {};

    \node (b11) [node, fill=white, above left=0.1cm and 0.3cm of b1] {};
    \node (b12) [node, fill=white, below left=0.1cm and 0.3cm of b1] {};

    \node (b2) [node, fill=white, right=0.3cm of b1] {};
    \node (b21) [node, fill=white, above right=0.25cm and 0.15cm of b2] {};
    \node (b22) [node, fill=white, right=0.3cm of b2] {};
    \node (b23) [node, fill=white, below right=0.25cm and 0.15cm of b2] {};

    \draw (b1) edge [edge] (b11);
    \draw (b1) edge [edge] (b12);
    \draw (b1) edge [edge] (l);
    \draw (b1) edge [edge] (b2);

    \draw (b2) edge [edge] (b21);
    \draw (b2) edge [edge] (b22);
    \draw (b2) edge [edge] (b23);

    \clip ($(b1)$)circle (0.2275cm);
    \begin{scope}
        \rotatebox{-45}{
            \fill[orange] ($(b1.north)$) rectangle ($(b1.west)$);
            \fill[skyblue] ($(b1.north)$) rectangle ($(b1.east)$);
            \fill[bluegreen] ($(b1.south)$) rectangle ($(b1.west)$);
            \fill[yellow] ($(b1.south)$) rectangle ($(b1.east)$);
        }
    \end{scope}
\end{tikzpicture}
        \hspace{1.8cm}
        \begin{tikzpicture}
    \tikzstyle{node} = [circle, draw, thick, minimum size=0.5cm]
	\tikzstyle{edge} = [thick]

    \node (p1) [node, ultra thick] {};
    \node (l) [node, fill=white, above=0.25cm of r] {};

    \node (p2) [node, fill=yellow, left=0.25cm of p1] {};
    \node (b1) [node, fill=bluegreen, left=0.25cm of p2]  {};
    \node (b11) [node, fill=white, above left=0.25cm and 0.15cm of b1] {};
    \node (b12) [node, fill=white, below left=0.25cm and 0.15cm of b1] {};

    \node (b2) [node, fill=white, right=0.3cm of p1] {};
    \node (b21) [node, fill=white, above right=0.25cm and 0.15cm of b2] {};
    \node (b22) [node, fill=white, right=0.3cm of b2] {};
    \node (b23) [node, fill=white, below right=0.25cm and 0.15cm of b2] {};

    \draw (p1) edge [edge] (l);
    \draw (p2) edge [edge] (b1);
    \draw (b1) edge [edge] (b11);
    \draw (b1) edge [edge] (b12);

    \draw (p1) edge [edge] (b2);
    \draw (b2) edge [edge] (b21);
    \draw (b2) edge [edge] (b22);
    \draw (b2) edge [edge] (b23);

    \clip ($(p1)$)circle (0.2275cm);
    \begin{scope}
        \fill[orange] ($(p1.south)$) rectangle ($(p1.north) + (-0.25, 0)$);
        \fill[skyblue] ($(p1.south)$) rectangle ($(p1.north) + (0.25, 0)$);
    \end{scope}

\end{tikzpicture}
    \end{center}
	\caption{
        Two possible ways (shown by dashed borders) for the district containing the current branch vertex (in bold) to overlap the segment shown by the colored vertices.
        Vertices are colored according to which candidate they vote for.
        The \wgm{} instances produced by our branching strategy are shown below, where vertices with multiple colors split votes between those candidates.
	}
    \label{fig:leaf_xp}
\end{figure}
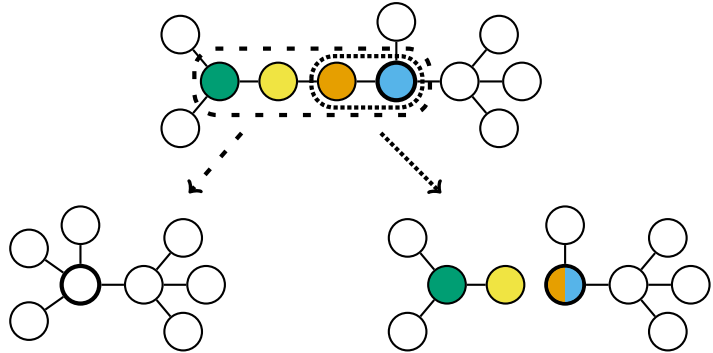

We now relate the parameter $\ell$ to the \textit{summed branch degree} which we define to be $d$, the sum of the degree of every vertex with degree $\delta(v) \geq 3$ (which we call \emph{branch vertices}).

\begin{lemma} \label{lem:leaf-to-branch}
    In a tree $T$ with $\ell$ leaves, the summed branch degree $d$ is at most $3\ell$.
\end{lemma}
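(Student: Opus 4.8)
The plan is to combine the handshake lemma with the fact that a tree on $n$ vertices has exactly $n-1$ edges. Let $B$ denote the set of branch vertices (those with $\delta(v) \geq 3$) and let $L$ be the set of leaves, so $|L| = \ell$. Since $\sum_{v \in V(T)} \delta(v) = 2|E(T)| = 2(n-1)$ while $|V(T)| = n$, subtracting $2n$ from both sides gives $\sum_{v \in V(T)} (\delta(v) - 2) = -2$. Every leaf contributes $-1$ to this sum and every degree-$2$ vertex contributes $0$, so all the remaining weight is carried by $B$:
\[
  \sum_{v \in B} (\delta(v) - 2) \;=\; \ell - 2 .
\]

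The key remaining step is to exploit that branch vertices have degree at least $3$. For any integer $\delta \geq 3$ we have $\delta \leq 3(\delta - 2)$, which is just a rearrangement of $2\delta \geq 6$. Applying this inequality term by term to the vertices of $B$,
\[
  d \;=\; \sum_{v \in B} \delta(v) \;\leq\; 3 \sum_{v \in B} (\delta(v) - 2) \;=\; 3(\ell - 2) \;=\; 3\ell - 6 \;\leq\; 3\ell .
\]
If instead $B = \emptyset$, then $d = 0 \leq 3\ell$ trivially, so the bound holds in all cases (this also dispenses with degenerate trees such as a single vertex or a path, where $B$ is empty). An equivalent route is to observe $d \geq 3|B|$, hence $|B| \leq d/3$, and substitute into the identity $d = \ell - 2 + 2|B|$ obtained above, solving for $d$ to get the same bound $d \leq 3\ell - 6$.

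There is essentially no obstacle to overcome here; the only thing to notice is the elementary per-vertex inequality $\delta(v) \leq 3(\delta(v) - 2)$ valid for $\delta(v) \geq 3$, which is precisely what converts the degree identity into the claimed bound. The remaining care is purely bookkeeping: double-checking the handshake computation and confirming that the empty-$B$ case is handled.
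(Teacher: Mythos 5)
Your proof is correct, but it takes a genuinely different route from the paper. The paper argues by induction on $n$: it removes a leaf $u$ with neighbor $v$ and does a three-way case analysis on the degree of $v$ in the smaller tree, tracking how $d$ and $\ell$ change ($d=d'$, $d=d'+3$, or $d=d'+1$ against $\ell=\ell'$ or $\ell=\ell'+1$). You instead give a direct global counting argument: the handshake lemma plus $|E(T)|=n-1$ yields $\sum_{v}(\delta(v)-2)=-2$, hence $\sum_{v\in B}(\delta(v)-2)=\ell-2$, and the pointwise inequality $\delta\le 3(\delta-2)$ for $\delta\ge 3$ converts this into $d\le 3(\ell-2)$. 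Your accounting does implicitly assume every leaf has degree $1$ and every non-branch, non-leaf vertex has degree $2$, which fails only for the one-vertex tree (degree $0$), but you correctly dispose of that and of all trees with $B=\emptyset$ separately, so there is no gap. What your approach buys is a sharper bound, $d\le 3\ell-6$ whenever a branch vertex exists (tight, e.g., for the star $K_{1,3}$ up to the slack in the pointwise inequality), with no induction; what the paper's induction buys is a uniform, case-by-case argument that never needs the degenerate cases singled out and stays entirely elementary. Either proof suffices for Corollary~\ref{cor:xp-alg}, where only the $O(n^{3\ell})$ form of the exponent is used.
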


\begin{proof}
    We induct on $n$, the number of vertices in $T$.
	For the base case, a tree with $n = 1$ vertex also has $\ell = 1$ and $d = 0$, so $d \leq 3\ell$ holds.
    Assume the claim holds for trees with less than $n$ vertices.
    Let $T$ be a tree with $n$ vertices, and let $u$ be a leaf in $T$ with neighbor $v$.
    Let $T'$ be the tree obtained by removing $u$ from $T$.
    By the inductive hypothesis, $d' \leq 3\ell'$ in $T'$.
    Consider the degree $\delta'$ of $v$ in $T'$.
    If $\delta'(v) = 1$, then $v$ is not a leaf in $T$, but $d = d'$ and $\ell = \ell'$.
    When $\delta'(v) = 2$, $d = d' + 3$ and $\ell = \ell' + 1$, so $d \leq 3\ell$ still holds.
    In the case that $\delta'(v) \geq 3$, $d = d' + 1$ and $\ell = \ell' + 1$, and so $d \leq 3\ell$ holds.
\end{proof}

Before describing the algorithm, we define a \emph{segment} of a tree $T$ to be any subpath of $T$ such that both endpoints are either a leaf or a branch vertex and all internal vertices have degree 2 in $T$.
The algorithm proceeds by selecting a branch vertex $b$ and a segment $S$ containing it.
It then branches on how the district $D$ containing $b$ could intersect $S$ (either ending at some vertex along $S$ or containing the entire segment).
The vertices in the same district as $b$ are contracted into $b$ and an edge is removed if $D$ does not contain all of $S$.
See Figure~\ref{fig:leaf_xp} for an example.

\begin{theorem} \label{thm:xp-alg}
    Let $(G, C, \chi, w, p, k)$ be an instance of \wgm{} with a tie-breaking rule $\eta$.
    If $G$ is a forest with summed branch degree at most $d$, then there exists an algorithm which can decide $(G, C, \chi, w, p, k)$ in time $O(n^d 2.619^k (n + m)^{O(1)})$.
\end{theorem}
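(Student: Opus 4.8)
The plan is to give a bounded-depth search-tree algorithm that repeatedly commits the behaviour of the district containing a branch vertex, shrinking the instance until no branch vertices remain, and then finishes with Corollary~\ref{cor:disjoint_fpt_paths}. I would phrase the algorithm recursively. If $G$ has no branch vertex it is a path forest, so I invoke Corollary~\ref{cor:disjoint_fpt_paths} with the given tie-breaking rule $\eta$ and return its answer in time $O(2.619^k(n+m)^{O(1)})$. Otherwise pick any branch vertex $b$ and one incident segment $S=(b=s_0,s_1,\dots,s_L=w)$, so that $b$ is an endpoint of $S$.

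Since the district $D$ containing $b$ induces a connected subtree of $G$, its intersection $D\cap S$ is a prefix $\{s_0,\dots,s_i\}$ of $S$, and I branch over the at most $L+1\le n$ choices of $i$. If $i<L$, I delete the edge $s_is_{i+1}$ and contract $\{s_0,\dots,s_i\}$ into $b$, adding the corresponding weight vectors onto $w(b)$; if $i=L$ and $w$ is a leaf, I contract all of $S$ into $b$; and if $i=L$ and $w$ is itself a branch vertex, I contract all of $S$ into $b$, which merges $b$ with $w$, leaving the continuation of $D$ past $w$ to be resolved in later rounds. In every branch the result is a \wgm{} instance on a strictly smaller forest (in the merge case I should also carry along the group/segment structure so recursion can continue), and I recurse on it.

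For correctness I would argue that a single branching step produces an equivalent instance: a district-partition of the reduced forest lifts to one of $G$ by expanding each contracted blob back to its connected preimage and re-adding the within-blob edges (distinct pieces stay distinct, and the deleted edges were never needed), while conversely the branching is exhaustive over all ways $D$ can meet $S$, so any satisfying district-partition $\D$ of $G$ descends through the matching choices to a satisfying district-partition of the path forest reached at the corresponding leaf. Weights, hence all district winners and the preferred candidate's margin, are preserved throughout, and $\eta$ is applied uniformly by the single final call to Corollary~\ref{cor:disjoint_fpt_paths}. For the running time I track the summed branch degree $d$: each step removes one segment incident to $b$ (absorbed or detached), so $\delta(b)$ drops by one; if $b$ thereby falls to degree $2$ it stops being a branch vertex, and a merge lowers two branch vertices' degrees at once, so in every case $d$ strictly decreases. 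Thus every root-to-leaf path of the search tree has length at most $d$; with at most $n$ children per internal node, at most $n^d$ leaves, $O((n+m)^{O(1)})$ work per node, and one $O(2.619^k(n+m)^{O(1)})$ path-forest solve per leaf, the total is $O(n^d\,2.619^k(n+m)^{O(1)})$.

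The main obstacle is making the branching genuinely exhaustive while keeping the summed-branch-degree potential monotone in every case, especially the case where $D$ swallows an entire segment and reaches a second branch vertex: this must be handled by a merge rather than by stopping, and one has to check that the merge, together with the bookkeeping of contracted weights and deleted edges, both decreases the potential by at least one and loses no district-partition of $G$. Verifying the lifting correspondence in both directions --- in particular that the edge deleted in branch $i<L$ is always a genuine inter-district edge under the choice being committed --- is the delicate part; the degree accounting and the invocation of Corollary~\ref{cor:disjoint_fpt_paths} are then routine.
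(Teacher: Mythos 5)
Your proposal is correct and follows essentially the same route as the paper: branch over the last vertex of the segment lying in the district of a chosen branch vertex, delete the corresponding edge and contract the prefix into that vertex (merging with the far endpoint when the whole segment is absorbed), use the summed branch degree as the measure that drops by at least one per step, and finish on path forests via Corollary~\ref{cor:disjoint_fpt_paths}. The paper phrases this as induction on $d$ rather than a search tree with a potential, but the argument and the $O(n^d\,2.619^k (n+m)^{O(1)})$ accounting are the same.
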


\begin{proof}
    We proceed by induction on $d$.
    If $d = 0$, then $G$ is a path forest, and so the instance can be solved in $O(2.619^k (n+m)^{O(1)})$ time by Corollary~\ref{cor:disjoint_fpt_paths}.
    Assume the claim holds for graphs with summed branch degree less than $d$.

    Consider some branch vertex $b$ in $G$, and let $S = v_1, \dots, v_s$ be a segment of $G$ such that $b = v_1$.
    If $(G, C, \chi, w, p, k)$ is a YES-instance, then in any valid district-partition, there exists a district $D$ which contains $b$.
    Thus, either $D$ contains all of $S$ or there is a last vertex $v_i$ along $S$ which is still in $D$.
    We branch on the choice of $v_i$ and construct a new instance $(G', C, \chi, w', p, k)$ in the following manner.
    First if $i < s$, remove the edge $v_iv_{i+1}$.
    Then, contract $v_1, \dots, v_i$ into $b$ so that it is a single vertex with weight vector $w(b) = \sum_{j=1}^{i} w(v_j)$.
    Any district partition for $G'$ is easily converted to a district partition for $G$ by extending the district containing $b$ along $S$ to $v_i$.

    Finally, we argue the runtime is correct.
    The new instance $(G', C, \chi, w', p, k)$ has summed branch degree at most $d - 1$.
    Either the degree of $b$ is reduced by removing an edge and contracting part of $S$, or all of $S$ is contracted into a single vertex with degree $\delta(b) + \delta(v_s) - 2$.
    Thus, by the inductive hypothesis, the reduced instance can be solved in time $O(n^{d-1} 2.619^k (n + m)^{O(1)})$.
    Since $|S| \leq n$, we can check every branch in $O(n^d 2.619^k (n + m)^{O(1)})$ time.
    We note that contracting the edges to create $G'$ can be handled in constant time by checking the branches defined by $v_1, \dots, v_s$ in that order.
\end{proof}

\begin{corollary} \label{cor:xp-alg}
    Let $(G, C, \chi, w, p, k)$ be an instance of \wgm{} with a tie-breaking rule $\eta$.
    If $G$ is a tree with at most $\ell$ leaves, then there exists an algorithm which can answer $(G, C, \chi, w, p, k)$ using $\eta$ in time $O(n^{3\ell} 2.619^k (n + m)^{O(1)})$.
\end{corollary}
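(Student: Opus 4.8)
The plan is to derive Corollary~\ref{cor:xp-alg} directly from Theorem~\ref{thm:xp-alg} together with Lemma~\ref{lem:leaf-to-branch}, using the trivial observation that a tree is a special case of a forest. The first step is to apply Lemma~\ref{lem:leaf-to-branch} to $G$: if $\ell'$ denotes the actual number of leaves of $G$, then the summed branch degree $d$ of $G$ satisfies $d \le 3\ell'$. Since by hypothesis $\ell' \le \ell$, this immediately gives $d \le 3\ell$.

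Next I would invoke Theorem~\ref{thm:xp-alg} on the instance $(G, C, \chi, w, p, k)$ with the given tie-breaking rule $\eta$. Since $G$ is a forest with summed branch degree at most $d$, the theorem decides the instance in time $O(n^d 2.619^k (n+m)^{O(1)})$. Substituting the bound $d \le 3\ell$ and using $n^d \le n^{3\ell}$ (valid since $n \ge 1$, so the exponent is monotone) yields a running time of $O(n^{3\ell} 2.619^k (n+m)^{O(1)})$, which is exactly the claimed bound.

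Every step is a substitution into previously established results, so I do not anticipate any genuine obstacle; the closest thing to a subtlety is the degenerate case in which $G$ has no branch vertices (a path or a single vertex). In that case $d = 0$ and $\ell' \le 2$, the inequality $n^d \le n^{3\ell}$ holds trivially, and Theorem~\ref{thm:xp-alg} simply falls back on Corollary~\ref{cor:disjoint_fpt_paths}. Hence the stated running time is achieved in all cases, completing the proof.
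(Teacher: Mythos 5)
Your proposal is correct and follows exactly the paper's route: the paper's proof is a one-line application of Theorem~\ref{thm:xp-alg} combined with Lemma~\ref{lem:leaf-to-branch}, and you simply spell out the substitution $d \le 3\ell$ and the monotonicity $n^d \le n^{3\ell}$. No gaps; your treatment of the degenerate path case is fine but not needed beyond what the cited results already cover.
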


\begin{proof}
    This result follows from Theorem~\ref{thm:xp-alg} and Lemma~\ref{lem:leaf-to-branch}.
\end{proof}

\section{Conclusion}
Identifying and preventing political gerrymandering is an important social problem that has recently seen significant attention from the algorithmic community.
To incorporate socio-political relationships beyond geographic proximity, Ito et al.\ formalized \gm{} on graphs~\cite{ito2019}.
\gm{} is a natural candidate for FPT algorithms, since the number of districts $k$ is often manageably small in real-world instances (e.g.\ 10-15).
In contrast, XP algorithms are likely infeasible at these parameter values, and so the precise parameterized complexity of \gm{} has important practical consequences.

Ito et al.\ spurred interest in \gm{} on trees specifically by proving NP-completeness for the problem on $K_{2,n}$ (i.e.\ a graph one vertex deletion away from a tree)~\cite{ito2019}.
In response, \gm{} results have been discovered for many restricted settings including a polynomial time algorithm for stars~\cite{ito2019}, weak NP-hardness for trees with at least 3 candidates~\cite{bentert2023}, and an FPT algorithm for paths parameterized by $k$~\cite{gupta2021}.
We further characterize the properties of trees that make \gm{} hard.
First, we show that \uwgm{} is W[2]-hard with respect to $k$ in trees of depth 2, answering an open question of~\cite{gupta2021}.
Furthermore, we prove that \uwgm{} remains W[2]-hard with respect to $k + \ell$ in trees with only $\ell$ leaves, even if $G$ is a subdivided star (i.e.\ only has one vertex with degree greater than 2).
Complementing these results, we give an algorithm to solve \wgm{} that is FPT with respect to $k$ when $\ell$ is a fixed constant.
All together, this essentially resolves the parameterized complexity of \gm{} with respect to the number of districts.

\subsubsection*{Acknowledgements}
Thanks to Christopher Beatty for his contributions to a course project that led to this research.
We also thank the anonymous reviewers whose comments on a previous version of this manuscript led to significant improvements in notational clarity.

\newpage

\bibliographystyle{plain}
\bibliography{refs}

\end{document}